\title{Bibliography management: \texttt{biblatex} package}
      \theoremstyle{plain}
\providecommand{\keywords}[1]
{
  \small	
  \textbf{\textit{Keywords---}} #1
}
\title{Bibliography management: \texttt{biblatex} package}
\author{}
\date{ }
\numberwithin{equation}{section}
\theoremstyle{definition}
\newtheorem{definition}{Definition}[section]
\title{Tilted Nonparametric Regression Function Estimation}
\author{Farzaneh Boroumand$^{1,2}$, Mohammad T. Shakeri$^{2}$, Nino Kordzakhia$^{1,3}$,\\
Mahdi Salehi$^{4}$, Hassan Doosti$^{1*}$ \\
\small $^{1}$Department of Mathematics and Statistics, Faculty of Science and Engineering, Macquarie University, NSW, Australia.\\
\small(Farzaneh.Boroumand@mq.edu.au)\\
\small $^{2}$Department of Biostatistics, Health School,
Mashhad University of Medical Sciences, Mashhad, Iran.\\
\small(shakerimt@mums.ac.ir) \\
\small $^{3}$ (Nino.Kordzakhia@mq.edu.au)\\
\small $^{4}$Department of Mathematics and Statistics, University of Neyshabur, Neyshabur, Iran.\\
\small(salehi.sms@neyshabur.ac.ir)\\
\small $^{*}$ Correspond author: Hassan.Doosti@mq.edu.au  \\
}
\begin{document}
\maketitle
\begin{abstract}

This paper provides the theory about the convergence rate of the tilted version of linear smoother. We study tilted linear smoother, a nonparametric regression function estimator, which is obtained by minimizing the distance to an infinite order flat-top trapezoidal kernel estimator. We prove that the proposed estimator achieves a high level of accuracy. Moreover, it preserves the attractive properties of the infinite order flat-top kernel estimator. We also present an extensive numerical study for analysing the performance of two members of the tilted linear smoother class named tilted Nadaraya-Watson and tilted local linear in the finite sample. The simulation study shows that tilted Nadaraya-Watson and tilted local linear perform better than their classical analogs in some conditions in terms of Mean Integrated Squared Error (MISE). Finally, the performance of these estimators as well as the conventional estimators were illustrated by curve fitting to  COVID-19 data for 12 countries and a dose-response data set.

\keywords{Tilted estimators;  Nonparametric regression function estimation; Rate of convergence; Infinite order flat top kernels}
\end{abstract}

\section{Introduction}

Let the regression model be
\begin{equation}
    Y_i = r(X_i) + {\epsilon}_i, \; \; 1 \leq i \leq n,
\end{equation}
where  $(Y_1, X_1), (Y_2, X_2 ), … ,(Y_n, X_n)$, are the data pairs, the design variable $X\sim{f_X}$, $X$ and  $\epsilon$  are independent,  ${\epsilon}_i$'s are independent and identically distributed (iid) errors with zero mean   $E(\epsilon )=0$ and variance $E(\epsilon^2)=\sigma^2$.
The  regression function $r$ and $f_X$ are unknown.
In this paper, we will focus on a nonparametric approach to estimate $r$. The main subject of this study is a class of nonparametric estimators called linear smoother. Nadaraya-Watson estimator and local linear estimator are two prevailing members of this class of estimators.
An estimator $\breve{r}$  of $r$, is said to be a linear smoother if it can be written in a form of linear function of weighted $Y$ sample.
Let the weight-vector be
$$l(x)=(l_1(x),...,l_n(x))^T.$$
Then the linear smoother $\breve{r}$ can be written as  
\begin{equation}
   \breve{r}_n(x) = l(x)^TY= {\sum}_{i=1}^{n}{l_i(x)Y_i},
\end{equation}
where ${\sum}_{i=1}^{n}{l_i(x)}=1$, see Buja et al. \cite{buja1989linear}.
Nadaraya-Watson estimator and local linear estimator can be written as a form of linear smoother with the following weight functions.
The weight functions for Nadaraya-Watson smoother, see Nadaraya \cite{nadaraya1964estimating}, Watson  \cite{watson1964smooth}, are 
\begin{equation}
    l_{i,NW}(x)=\frac{K(\frac{X_i-x}{h})}{{\sum}_{j=1}^{n}{K(\frac{X_j-x}{h})}},\;\; i=1,\dots,n.
\end{equation}
For the standard local linear smoother the weight functions are defined as follows, 
\begin{equation}
    l_{i,ll}(x)=\frac{b_i(x)}{\sum_{j=1}^nb_j(x)},\; i=1,\dots,n, 
\end{equation}
\begin{equation}
  b_i(x)=K(\frac{X_i-x}{h})(S_n,_2(x)-(X_i-x)S_n,_1(x)), \; i=1,\dots,n,\nonumber  
\end{equation}
\begin{equation}
    S_n,_j(x)={\sum}_{i=1}^{n}{K(\frac{X_i-x}{h})(X_i-x)^j},\;  j=1, 2.\nonumber
\end{equation}
where $K$ is a kernel function. The kernel function depends on the bandwidth, or smoothing, parameter  $h$ and assigns weights to the observations according to the distance to the target point $x$, see McMurry and  Politis,  \cite{mcmurry2004nonparametric}.  The small values of $h$ cause the neighboring points of $x$ to have the larger influence on the estimate leading to curvature changes in the estimated curve. The larger values of $h$ imply that the distanced  data points will have the same effect as the neighboring points on the local fit, resulting in a smoother estimate. Thus  finding an optimal $h$ is the essential task in the estimation procedure, see Wasserman \cite{wasserman2006all}.
One of the ways finding the optimal $h$ is by minimising the leave-one-out cross validation score function,  \cite{wasserman2006all}.
The leave-one-out cross validation score is defined by   
\begin{equation}
    CV=\hat{R}(h)=\frac{1}{n} \sum_{i=1}^{n} (Y_i-\breve{r}_{(-i)}(X_i))^2,
\end{equation}
where $\breve{r}_{(-i)}(X_i)$ is obtained from (1.2) by omitting the $i^{th}$ pair $(X_i, Y_i)$.
In this work, we will present the tilted versions of linear smoother.   
A tilting technique applied to an empirical distribution, leads to replacing $1/n$ data weights from uniform distribution by $p_i, \; 1 \leq i \leq n,$ from general multinomial distribution over data. Hall and  Yao \cite{hall2003data} studied asymptotic properties of the tilted regression estimator with autoregressive errors using generalized empirical likelihood method, which typically involves solving a non-linear and high dimensional optimization problem. Grenander \cite{grenander1956theory} introduced a tilted method to impose restrictions on the density estimates. There are two approaches to estimating of the tilting parameters: Empirical likelihood  and Distance Measure based approaches. The empirical likelihood-based method is a semi-parametric method which provides a convenience of adding a parametric model through estimating equations. Owen \cite{owen1988empirical} proposed an empirical likelihood to be used as an alternative to the likelihood ratio tests, and derived its asymptotic distribution. Chen \cite{chen1997empirical}, Zhang \cite{biao1998note}, Schick et al. \cite{schick2009improved}, M{\"u}ller et al. \cite{muller2005weighted} further developed the empirical likelihood-based method for estimating the tilting parameters. Chen \cite{chen1997empirical} applied the empirical likelihood method to estimate the tilting parameters $p_i, \; 1 \leq i \leq n,$ under the constraints on the shape of distribution. In his kernel-based estimator, $n^{-1}$ was replaced by the weights obtained from the empirical likelihood method. In \cite{chen1997empirical} it was  proved that the proposed estimator has a smaller variance than the conventional kernel estimators. Schick et al.\cite{schick2009improved}, also used the similar  approach obtaining the consistent tilted estimator with  higher efficiency  than that of conventional estimators in the autoregression framework.
In contrast in the  Distance Measure approach, the tilted estimators are defined by minimizing distances, conditional to various types of constraints. Hall and Presnell \cite{hall1999intentionally}, Hall and Huang \cite{hall2001nonparametric}, Carroll et al. \cite{carroll2011testing}, Doosti and Hall \cite{doosti2016making}, Doosti et al. \cite{doosti2018nonparametric} used the setup-specific Distance Measure approaches for estimating the tilting parameters.
Carroll \cite{carroll2011testing},  proposed a new approach for density function estimation, and regression function estimation as well as  hypothesis testing under shape constraints in the model with measurement errors. A tilting method used in \cite{carroll2011testing}   led to curve estimators under some constraints. Doosti and Hall \cite{doosti2016making} introduced a new higher order nonparametric density estimator, using tilting method, where  they used $L_2$-metric between the proposed estimator and a consistent 'Sinc' kernel based estimator.  
Doosti et al. \cite{doosti2018nonparametric}, have introduced a  new way of choosing the bandwidth and estimating the tilted parameters based on the cross-validation function. In \cite{doosti2018nonparametric}, it was shown that the proposed density function estimator had improved  efficiency and was more cost-effective than the conventional kernel-based estimators studied in this paper. 

In this work, we propose a new tilted version of a linear smoother which is obtained by minimising the distance to a comparator estimator. The comparator estimator is selected to be an infinite order flat-top kernel  estimator. This class of estimators is characterized by a Fourier transform, which is flat near the origin and infinitely differentiable elsewhere, see \cite{mcmurry2008minimally}.  We prove that the tilted estimators achieve a high level of accuracy, yet preserving the attractive properties of an infinite-order flat-top kernel estimator.

The rest of this paper contains the additional four sections and Appendix. In the  Section 2, we provide the notation, definitions and preliminary results. The Section 2 also includes the definition of an infinite-order estimator, as a comparator estimator.  Section 3 contains the main results formulated in Theorems 1-3. We present a simulation study in the Section 4.  The real data applications are provided in the Section 5.  The proof of the main theorem is accommodated in the Appendix.

\section{Notation and preliminary results}
\theoremstyle{definition}
\begin{definition}
A general infinite order flat-top kernel $K$ is defined
 \begin{equation}
 K(x)=\frac{1}{2\pi}\bigints_{-\infty}^{\infty}\lambda(s)e^{-isx} ds,
 \end{equation}
where $\lambda(s)$ is the Fourier transform of kernel $K$, and $c>0$ is a fixed constant.
\begin{equation}
\lambda(s)=\begin{cases}
1, & \mid s\mid\leq c\\
g(\mid s\mid), & \mid s \mid >c 
\end{cases},\nonumber
\end{equation}
  and $g$ is not unique and it should be chosen to make $\lambda(s)$, $\lambda^2(s)$, and $s\lambda(s)$ integrable \cite{mcmurry2004nonparametric}. 
\end{definition}

\subsection{Infinite order flat-top kernel regression estimator }
Let $\check{r}$ be a linear smoother  
\begin{equation}
    \check{r} = {\sum}_{i=1}^{n}{\check{l}_i(x_i)Y_i},
\end{equation}
where
\begin{equation}
\check{l}_i(x)=\frac{K(\frac{X_i-x}{h})}{{\sum}_{j=1}^{n}{K(\frac{X_j-x}{h})}}\nonumber
\end{equation}
and $K$ is an infinite order flat top kernel from (2.1), also see McMurry and Politis \cite{mcmurry2004nonparametric}. The trapezoidal kernel 
\begin{equation}
    K(x)=\frac{2(cos(x/2)-cos(x))}{\pi x^2},\nonumber
\end{equation}
is an infinite order flat top kernel  satisfying Definition 2.1 since
the Fourier transform of $K(x)$ is 
\begin{equation}
\lambda(s)=\begin{cases}
1 &   \mid s\mid\leq 1/2,\\
2(1-\mid s\mid) &  1/2 < \mid s \mid\leq 1, \\ 0 & \mid s\mid>1.\nonumber
\end{cases}
\end{equation}
\subsection{Tilted linear smoother}
We define tilted linear smoother as follows
\begin{equation}
    \hat{r}_n(x|h,p) = {\sum}_{i=1}^{n}{p_i l_i(x_i)Y_i},
\end{equation}
where ${p_i}$'s are tilting parameters, $p_i \geq 0$ and $\sum_{i}^{n}p_i=1$.
The bandwidth parameter $h$ and the vector of tilting parameters $p=(p_1,\cdots,p_n)$, are to be estimated. 
In Section 4, we evaluate the performance of tilted versions of Nadaraya-Watson (1.2) and standard local linear estimators (1.3) in finite samples.
\section{Main results}
Let $\hat{r}_n(.|\theta)$ be the tilted linear smoother from (2.3) for the regression function ${r}$, where $\theta=(h,p)$ is a vector of unknown parameters. Further $\check{r}$ from (2.2) will be used as a comparator estimator of $r$, $\check{r}$ can be any estimator with an optimal convergence rate \cite{mcmurry2008minimally}. We will estimate $\theta$ by minimising the $L_2-$ distance  between $\hat{r}_n(.|\theta)$ and $\check{r}$ preserving the convergence rate of $\check{r}$, provided the following assumptions hold

\begin{enumerate}
\item [(a)] $\Vert \check{r} - r \Vert = O_p(\delta_n)$
\item [(b)] 
There exists  $\tilde{\theta}$  such that $\hat{r}_n(.|\tilde{\theta})$ and $\check{r}$ possess the same convergence rates, i.e. $\Vert\hat{r}_n(.|\tilde{\theta}) - r \Vert = O_p(\delta_n)$, 
\end{enumerate}
where  $\delta_n \geq 0$ converges to 0 as n tends to $\infty$, e.g. $\delta_n=n^{-c}$ for some $c\in (0,1/2)$.
A further discussion on the assumptions (a)-(b) can be found in Doosti and Hall, \cite{doosti2016making}.

We define $\hat{\theta}$ as the solution to the optimisation problem as 
\begin{equation}
    \hat{\theta}= \arg \min_{\theta} {\Vert\hat{r}_n(.|\theta) - \check{r} \Vert }
\end{equation}
subject to the constraints for the bandwidth parameter  $h>0$ and vector $p$ introduced in Section 2.2.
 
 In Theorem 1, we show that the convergence rate of $\hat{r}_n(.|\hat{\theta})$ and $\check{r}$ is $O_p(\delta_n)$.
\newtheorem{theorem}{Theorem}
\begin{theorem}
If the  assumptions (a)-(b) hold then for  any $\hat{\theta}$ which fulfills (3.1) we have 
$$
\Vert\hat{r}_n(.|\hat{\theta}) - r \Vert = O_p(\delta_n).
$$
\end{theorem}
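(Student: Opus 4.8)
The plan is to run a short chain of triangle inequalities, exploiting the fact that $\hat\theta$ is defined as a minimiser in (3.1). The only substantive inputs are assumptions (a) and (b); no property of the flat-top kernel or of the tilting weights beyond what is already encoded in those two assumptions is needed.

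First I would record the consequence of optimality. Since $\hat\theta$ minimises $\theta \mapsto \Vert \hat r_n(\cdot|\theta) - \check r\Vert$ over the admissible parameter set, and since the parameter $\tilde\theta$ furnished by assumption (b) is itself admissible (it satisfies the bandwidth and tilting constraints), we get
\begin{equation}
\Vert \hat r_n(\cdot|\hat\theta) - \check r\Vert \;\le\; \Vert \hat r_n(\cdot|\tilde\theta) - \check r\Vert. \nonumber
\end{equation}
Next I would bound the right-hand side by inserting $r$: by the triangle inequality $\Vert \hat r_n(\cdot|\tilde\theta) - \check r\Vert \le \Vert \hat r_n(\cdot|\tilde\theta) - r\Vert + \Vert r - \check r\Vert$, and both terms on the right are $O_p(\delta_n)$ — the first by assumption (b), the second by assumption (a). Hence $\Vert \hat r_n(\cdot|\hat\theta) - \check r\Vert = O_p(\delta_n)$.

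Finally I would pass from $\check r$ back to $r$ with one more triangle inequality,
\begin{equation}
\Vert \hat r_n(\cdot|\hat\theta) - r\Vert \;\le\; \Vert \hat r_n(\cdot|\hat\theta) - \check r\Vert + \Vert \check r - r\Vert, \nonumber
\end{equation}
where the first term is $O_p(\delta_n)$ by the previous step and the second is $O_p(\delta_n)$ by assumption (a); adding two $O_p(\delta_n)$ terms leaves an $O_p(\delta_n)$ term, which is the claim.

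There is no real obstacle here: the argument is essentially the abstract ``an estimator that is close to a rate-optimal comparator, and whose feasible set contains at least one rate-optimal point, is itself rate-optimal'' lemma (cf. Doosti and Hall, \cite{doosti2016making}). The only points worth stating carefully are that $\tilde\theta$ from (b) is genuinely feasible for the minimisation (3.1), so that the optimality inequality applies, and that the implied constants in the $O_p(\cdot)$ bounds combine additively — both routine. If one wanted to be fully explicit about the stochastic order, I would phrase it with $\varepsilon$--$M$ bounds: given $\varepsilon>0$ pick $M$ so that each of the events $\{\Vert\check r - r\Vert \le M\delta_n\}$ and $\{\Vert \hat r_n(\cdot|\tilde\theta) - r\Vert \le M\delta_n\}$ has probability at least $1-\varepsilon/2$, and then on their intersection the chain above gives $\Vert \hat r_n(\cdot|\hat\theta) - r\Vert \le 3M\delta_n$.
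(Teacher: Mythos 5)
Your proposal is correct and follows essentially the same route as the paper's own proof: the optimality inequality $\Vert\hat r_n(\cdot|\hat\theta)-\check r\Vert\le\Vert\hat r_n(\cdot|\tilde\theta)-\check r\Vert$ from (3.1), combined with triangle inequalities through $r$ and assumptions (a)--(b). Your version is, if anything, slightly more explicit than the paper's about the feasibility of $\tilde\theta$ and the final passage back from $\check r$ to $r$, but the argument is the same.
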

\begin{proof}
Due to Assumption (a), there 
exists $\tilde{\theta}$ such that 
\begin{equation}
   \Vert\hat{r}_n(.|\tilde{\theta}) - \check{r} \Vert \leq \Vert\hat{r}_n(.|\tilde{\theta}) - r \Vert + \Vert r - \check{r} \Vert = O_p(\delta_n), \nonumber
\end{equation}
in which the first equation is a result of the triangle inequality, and specifically from the fact that
\begin{equation}
    \Vert r - \check{r} \Vert = O_p(\delta_n);
\end{equation}
see assumption, part(a). If $\Tilde{\theta}$ is as in assumption 1, part (2) then 
\begin{equation}
     \Vert\hat{r}_n(.|\hat{\theta}) - \check{r}\Vert \leq \Vert\hat{r}_n(.|\tilde{\theta}) - \check{r} \Vert = O_p(\delta_n).
\end{equation}
Together, results (3.2) and (3.3) imply Theorem 1.
\end{proof}

Theorem 1 implies that the convergence rate of $\hat{r}_n(.|\hat{\theta})$ estimator coincides with that of $\check{r}$ with the bandwidth parameter $h$ replaced by its ‘plug-in’ type estimate similar to that from \cite{mcmurry2004nonparametric} and \cite{mcmurry2008minimally}.

The regression function  $r\in \mathcal{C}$, where $\mathcal{C}$ is a class of regression functions, if 
 \begin{equation}
     \lim_{C\to\infty} \limsup_{n\to\infty} \sup_{r\in \mathcal{C}}[P\{\Vert\hat{r}_n(.|\tilde{\theta}) - r \Vert \geq C\delta_n\} + P\{\Vert\check{r} - r \Vert \geq C\delta_n\}]=0,
 \end{equation}
 subject to existence of $\tilde{\theta}$.

\begin{theorem}
If (3.4) holds for regression functions from $\mathcal{C}$ then
\begin{equation}
     \lim_{C\to\infty} \limsup_{n\to\infty} \sup_{r \in \mathcal{C}}P\{\Vert\hat{r}_n(.|\hat{\theta}) - r \Vert \geq C\delta_n\}=0.
\end{equation}
\end{theorem}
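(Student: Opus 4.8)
The plan is to repeat the argument of Theorem 1, but now keeping track of the multiplicative constants so that they can be absorbed uniformly over the class $\mathcal{C}$. Fix $r \in \mathcal{C}$ and let $\tilde{\theta}$ be the parameter whose existence is assumed in (3.4). Since $\hat{\theta}$ solves the minimisation problem (3.1), we have $\Vert \hat{r}_n(.|\hat{\theta}) - \check{r}\Vert \le \Vert \hat{r}_n(.|\tilde{\theta}) - \check{r}\Vert$, and two applications of the triangle inequality give the deterministic bound
\begin{equation}
\Vert \hat{r}_n(.|\hat{\theta}) - r\Vert \le \Vert \hat{r}_n(.|\hat{\theta}) - \check{r}\Vert + \Vert \check{r} - r\Vert \le \Vert \hat{r}_n(.|\tilde{\theta}) - r\Vert + 2\Vert \check{r} - r\Vert . \nonumber
\end{equation}

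First I would convert this into an inclusion of events: for every $C>0$,
\begin{equation}
\{\Vert \hat{r}_n(.|\hat{\theta}) - r\Vert \ge C\delta_n\} \subseteq \{\Vert \hat{r}_n(.|\tilde{\theta}) - r\Vert \ge (C/2)\delta_n\} \cup \{\Vert \check{r} - r\Vert \ge (C/4)\delta_n\}, \nonumber
\end{equation}
because if both events on the right fail then the left-hand norm is strictly less than $(C/2)\delta_n + 2\cdot(C/4)\delta_n = C\delta_n$. Applying the union bound and then taking the supremum over $r \in \mathcal{C}$, and using $\sup(f+g)\le \sup f+\sup g$,
\begin{equation}
\sup_{r\in\mathcal{C}} P\{\Vert \hat{r}_n(.|\hat{\theta}) - r\Vert \ge C\delta_n\} \le \sup_{r\in\mathcal{C}} P\{\Vert \hat{r}_n(.|\tilde{\theta}) - r\Vert \ge (C/2)\delta_n\} + \sup_{r\in\mathcal{C}} P\{\Vert \check{r} - r\Vert \ge (C/4)\delta_n\}. \nonumber
\end{equation}

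Then I would take $\limsup_{n\to\infty}$ of both sides and afterwards let $C\to\infty$. Since $(C/2)\to\infty$ and $(C/4)\to\infty$ as $C\to\infty$, hypothesis (3.4) forces each of the two terms on the right to have iterated limit $0$, which is exactly the conclusion (3.5).

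The point that needs the most care, and which I regard as the main obstacle, is the role of $\tilde{\theta}$: condition (3.4) is a single joint statement about a $\tilde{\theta}$ that may depend on $r$ (and possibly on $n$), so I would need to verify that the $\tilde{\theta}$ certifying the rate of $\hat{r}_n(.|\tilde{\theta})$ is the same one used in the chain of inequalities above, and that splitting the supremum across the two summands loses no uniformity (it does not, by subadditivity of $\sup$, and the two pieces of (3.4) are already summed there). A secondary technical detail is to confirm that the minimiser $\hat{\theta}$ in (3.1) is attained — or else to argue with a near-minimiser up to an $o_p(\delta_n)$ slack — so that the inequality $\Vert \hat{r}_n(.|\hat{\theta}) - \check{r}\Vert \le \Vert \hat{r}_n(.|\tilde{\theta}) - \check{r}\Vert$ is legitimate; the extra slack, being of smaller order than $\delta_n$, would be harmless in the event-inclusion step.
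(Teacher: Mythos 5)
Your argument is correct: the chain $\Vert\hat{r}_n(.|\hat{\theta})-r\Vert\le\Vert\hat{r}_n(.|\tilde{\theta})-r\Vert+2\Vert\check{r}-r\Vert$ (minimality of $\hat{\theta}$ plus two triangle inequalities), followed by the event inclusion, the union bound and subadditivity of the supremum, converts hypothesis (3.4) directly into (3.5), and your caveats about $\tilde{\theta}$ depending on $r$ and about attainment of the minimiser are exactly the right points to flag and are harmless as you say. The paper states Theorem 2 without giving a proof, and what you have written is precisely the uniform-in-$r$ quantification of the paper's own proof of Theorem 1, so your proposal matches the intended argument rather than taking a different route.
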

Theorem 2 states that $\hat{r}_n(.|\hat{\theta})$ and $\check{r}$ converge to $r$ uniformly in $\mathcal{C}$.

Let $X_1,X_2,...,X_n$ be iid random variables with probability density function (pdf) $f(x)$ and $\hat{g}(x)$ be its kernel based density function estimator
\begin{equation}
    \hat{g}(x)=\frac{1}{nh}\sum_{i=1}^{n}K(\frac{x-X_i}{h})\nonumber
\end{equation}
and $g(x)= E_f\hat{g}(x)$.

Suppose that (c) - (d) hold for $\phi_K$ and $\phi_{q}$, Fourier transforms for $K$ and $q=r\cdot g$, respectively,  
\begin{enumerate}
\item [(c)] $\phi_K{(t)^{-1}}=1+\sum_{j=1}^{k} c_jt^{2j},$ $c_1,...,c_k$ are real numbers;
\item [(d)] $\int|\phi_{q}(t)||t|^{2k}dt<\infty;\nonumber$
\item [(e)] For constants $C_1,...,C_5>0$, and  $j=1,...,k$ the derivatives ${q}^{(2j)}$ exist, $|{q}^{(2j)}(x)|\leq C_1$ and $\int|{q}^{(2j)}(x)|\leq C_1$, and either
\begin{enumerate}
    \item [(1)] $|{q}^{(2j)}(x)/{r \cdot f}(x)|\leq C_1$ for all $x$, or
    \item [(2)]  $|{q}^{(2j)}(x)/{r \cdot f}(x)|\leq C_1(1+|x|)^{C_2}$ for all $x$ and $P(|X|\geq x)\leq C_3 \exp(-C_4x^{C_5}),\; x>0$.
\end{enumerate}
\item [(f)] Under assumption (e)-(1), $\delta_n \rightarrow 0$ as $n \rightarrow \infty$, so that $n^{1/2}\delta_n \rightarrow \infty$ , and  under assumption (e)-(2), $n^{1/2}log(n)^{-C_2 /4C_5}\delta_n \rightarrow \infty$, where $C_2$ and $C_5$ are defined in (e)-(2).
\end{enumerate}
Assumption (c)-(f) are reasonable and considered in tilted density function estimation in \cite{doosti2016making}. 
It is anticipated that $\Vert\hat{r}_n(.|\hat{\theta}) - {r} \Vert =O_p(\delta_n)$, where $\delta_n$ converges to 0 slower than $n^{-1/2}$ as shown in Theorem 3. Next we formulate the assumption using the first term of the expression in the left hand side  of  (3.4) 
\begin{equation}
     \lim_{C\to\infty} \limsup_{n\to\infty} \sup_{r \in \mathcal{C}} P_r(\Vert\check{r} - r \Vert > C\delta_n)=0.
\end{equation}
\begin{theorem}
Let (a)-(f) be valid and $\hat{\theta}$ be defined in (3.1).
\begin{enumerate}
    \item[\textbf{I.}] If in addition $\Vert\check{r} - r \Vert =O_p(\delta_n)$  then $\Vert\hat{r}_n(.|\hat{\theta}) - r \Vert =O_p(\delta_n).$ 
    \item[\textbf{II.}] If the assumptions in (e) hold uniformly for $q \in \mathcal{C}$ and obtain (3.6) then (3.5) is valid.
\end{enumerate}
\end{theorem}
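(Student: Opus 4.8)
The plan is to derive both parts from the already-established Theorems 1 and 2, so that the real work is only to check their hypotheses under (a)--(f).

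For Part I, observe that the extra hypothesis $\Vert\check r - r\Vert = O_p(\delta_n)$ is exactly assumption (a); hence, by Theorem 1, it suffices to exhibit one parameter value $\tilde\theta = (\tilde h,\tilde p)$ for which $\Vert\hat r_n(\cdot|\tilde\theta) - r\Vert = O_p(\delta_n)$, i.e.\ to verify assumption (b). I would split $\hat r_n(\cdot|\tilde\theta) - r$ into a centred stochastic part and a deterministic bias. Writing the suitably scaled numerator of the tilted smoother as $\hat q(x)\propto \sum_i \tilde p_i\, K((x-X_i)/h)\,Y_i$, a tilted kernel estimator of $q = r\cdot g$, the bias is controlled in the Fourier domain: condition (c), the polynomial form $\phi_K(t)^{-1} = 1 + \sum_{j=1}^k c_j t^{2j}$, is precisely what permits a choice of tilting weights $\tilde p$ that annihilates the bias up to order $h^{2k}$, exactly as in the infinite-order flat-top construction of McMurry and Politis and the tilted density estimator of Doosti and Hall; condition (d) makes the leading remainder $\int|\phi_q(t)||t|^{2k}\,dt$ finite; and condition (e) bounds the ratio $q^{(2j)}/(r\cdot f)$ that surfaces when one passes from $q$ back to $r = q/g$, in either the bounded regime (e)(1) or the polynomial-growth-with-exponential-tail regime (e)(2). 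Taking $\tilde h$ on the same scale as the smoothing used in $\check r$ and invoking (f) — $n^{1/2}\delta_n\to\infty$, with the $\log n$ correction under (e)(2) — the stochastic part is $O_p(n^{-1/2}) = o_p(\delta_n)$, so $\Vert\hat r_n(\cdot|\tilde\theta) - r\Vert = O_p(\delta_n)$, and Theorem 1 then yields $\Vert\hat r_n(\cdot|\hat\theta) - r\Vert = O_p(\delta_n)$.

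For Part II, I would verify the hypothesis (3.4) of Theorem 2. Its second half, $\lim_{C}\limsup_n \sup_{r\in\mathcal C} P_r(\Vert\check r - r\Vert > C\delta_n) = 0$, is literally (3.6), which is assumed. Its first half, the uniform version $\sup_{r\in\mathcal C}P_r(\Vert\hat r_n(\cdot|\tilde\theta) - r\Vert \ge C\delta_n)\to 0$, comes from re-running the bias--variance estimate of Part I with the constants $C_1,\dots,C_5$ and the tail bound of (e) held fixed across the class — which is exactly what "the assumptions in (e) hold uniformly for $q\in\mathcal C$" provides — so that every $O_p$-bound there becomes uniform in $r\in\mathcal C$. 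With (3.4) in hand, Theorem 2 gives (3.5).

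The main obstacle is the bias-cancellation step in Part I: one must construct the tilting vector $\tilde p$ and check that it is, at least asymptotically, a legitimate probability vector ($\tilde p_i\ge 0$, $\sum_i\tilde p_i = 1$), or that it differs from the uniform weights by an amount small enough that feasibility can be restored at negligible cost to the rate; and, under (e)(2), that the polynomial growth of $q^{(2j)}/(r\cdot f)$ is absorbed by the exponential tail of $X$, which is what forces the $\log n$ factor in (f). Once this estimate holds for a fixed $r$, the uniform strengthening required for Part II is essentially bookkeeping, provided the constants in (e) do not depend on $r\in\mathcal C$.
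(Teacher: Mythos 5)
Your overall architecture is the same as the paper's: Part I is reduced, via Theorem 1, to exhibiting a $\tilde\theta$ satisfying assumption (b), with the tilting weights built to cancel the bias through the Fourier-domain structure of (c)--(d) and the variance controlled via (e)--(f); Part II is obtained from Theorem 2 by combining (3.6) with a uniform-in-$\mathcal{C}$ version of the bound for $\hat r_n(\cdot|\tilde\theta)$. However, what you defer as ``the main obstacle'' is precisely the content of the paper's proof, so as it stands your argument has a gap rather than a proof. The paper takes $p_i=n^{-1}\pi(X_i)$, imposes exact unbiasedness of the smoother's numerator, and inverts $\phi_K$ using (c) to get the explicit tilt $\pi=1+\sum_{j=1}^{k}c_j(-h^2)^j\,q^{(2j)}/(r f)$ (with $q=r\cdot g$ and (d) guaranteeing the inversion integral converges); assumption (e)(1) then makes $\pi$ bounded, while under (e)(2) one gets $|\pi-1|\le C(1+|x|)^{C_2}h^2$, hence $\pi$ is controlled on $|X|\le\lambda_{1n}$ and the exponential tail gives $P(|X|>\lambda_{1n})=O(n^{-C})$ for a suitable $\lambda_{1n}$; the normalization defect $\sum_i p_i=1+O_p(n^{-1/2})$ is removed by a negligible rescaling. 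These are exactly the feasibility points you list as open, and they must be carried out, not just flagged.

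A second concrete error: your stochastic term is not $O_p(n^{-1/2})$. With the unbiased tilt, the integrated variance is bounded by $(nh)^{-1}(\sup\pi)^2\int K^2$, so the stochastic part of the $L_2$ error is $O_p\{(nh)^{-1/2}\}$. The role of (f) is exactly to repair this: $n^{1/2}\delta_n\to\infty$ (with the $\log n$ correction under (e)(2)) gives $n^{-1}=o(\delta_n^2)$, so one may choose $h=h(n)\downarrow 0$ slowly enough that $(nh)^{-1}=O(\delta_n^2)$, yielding $\Vert\hat r_n(\cdot|\tilde\theta)-r\Vert=O_p(\delta_n)$. Fixing $\tilde h$ ``on the same scale as the smoothing used in $\check r$,'' as you propose, is not how (f) enters and does not by itself justify the claimed $o_p(\delta_n)$ bound; the bandwidth must be chosen from the condition $(nh)^{-1}=O(\delta_n^2)$. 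Your treatment of Part II (uniformity of the constants in (e) over $\mathcal{C}$ plus (3.6), then Theorem 2) matches the paper once the pointwise bound above is actually established.
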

The proof of Theorem 3 is given in the Appendix. 
\section{Simulation study}
We present the results of simulation study of the performance of tilted estimators in various settings. Data were   generated using exponential function and sin regression functions with normal and uniform design distributions. Four samples  of sizes $n=(60, 100, 200, 1000)$ were sampled from the population with regression errors which had standard deviations $\sigma=(0.3, 0.5, 0.7, 1, 1.5, 2)$. In the setting for each set of $\sigma$ and $n$  we generated 500 data sets. The Median Integrated Squared Error (MISE) was estimated using the Monte Carlo method. The leave-one-out cross validation score from (1.5) was employed to choose the optimal bandwidths for Nadaraya-Watson and local linear estimators,  \cite{wasserman2006all}. For an infinite order flat-top kernel estimator, bandwidth was selected using the rule of thumb introduced by McMurry and Politis \cite{mcmurry2004nonparametric} as part of 'iosmooth'. The bandwidth parameters for tilted estimators were estimated using our suggested procedure.  The exponential  regression function $r_1(x)=x+4exp(-2x^2)/\sqrt{2\pi}$. The design densities were taken to be uniform on $[-2,2]$ and $N(0,1)$. The Integrated Squared Error (ISE) was calculated over the interval $[-2,2]$. The sin regression function $r_2(x)=sin(4\pi x)$ was paired with the uniform design density on $[0,1]$. The ISE has been calculated over  $[0,1]$ and $[0.15,0.85]$, the latter is chosen for addressing  boundary effect.
\clearpage

In Table \ref{tab:table-1} we provide  the MISEs for the proposed estimators, the comparator estimator, and the conventional estimators. Data were generated using $r_1(x)$ regression function along with normal design density and normal distribution for the error term. It is evident that for moderate sample size (n=200) and a large sample size (n=1000) and for the medium standard deviation (0.5 and 0.7), the tilted estimators outperformed other estimators.  Moreover, for larger sample size, as standard deviation of error terms increases, the MISE of tilted estimators decreases.
\begin{table}[h!]
\caption{\label{tab:table-1} MISE for Infinite Order (IO) estimator with the trapezoidal kernel, Nadaraya-Watson (NW) estimator, standard local linear (LL) estimator, tilted NW estimator with 4 (NW p4) and 10 (NW p10) weighting nodes, tilted LL estimator with 4 (LL p4) and 10 (LL p10) weighting nodes, Exponential regression function and normal design density function. In each row the minimum MISE is highlighted in bold.}
\begin{center}
\begin{tabular}{|c|c|c|c|c|c|c|c|c|} 
\hline
$n$ & $\sigma$ & IO & NW & LL & NW p4 &NW p10 & LL p4 & LL p10\\
\hline
\multirow{6}{2em}{60} & 0.3	& 0.2070 &	0.0861 & $\boldsymbol{0.0742}$ &	0.1422 & 0.1676 &	0.1194	& 0.1566 \\ 
& 0.5 &	0.2771 &	0.1630 &	$\boldsymbol{0.1486}$ & 0.2152 &	0.2294 &	0.1930 & 0.2203\\ 
& 0.7&	0.3755	& 0.2710 &	$\boldsymbol{0.2432}$ &	0.3039 &	0.3275 &	0.3009 &	0.3223  \\ 
& 1 & 0.5888 &	0.4626	& $\boldsymbol{0.4110}$ &0.5055 & 0.5597 &	0.5042 &	0.5261\\ 
&1.5&	1.1451&	0.9115&	$\boldsymbol{0.8514}$&	0.9576&	1.0860& 0.9590	& 1.0620\\ 
&2&	1.9295&	1.5667&	$\boldsymbol{1.4383}$&	1.6168&	1.7774&1.6119 &	1.7292\\
\hline
\multirow{6}{2em}{100} & 0.3&	0.0814&	0.0571&	$\boldsymbol{0.0462}$&	0.0609&	0.06478&0.05156&	0.0595\\
&0.5&	0.1382&	0.1044&	$\boldsymbol{0.0893}$&	0.1131&	0.1168&	0.1055 &	0.1107\\
&0.7&	0.2233&	0.1649&	$\boldsymbol{0.1454}$&	0.1879&	0.1903&	0.1744	& 0.1846\\
&1&	0.3955&	0.2842&	$\boldsymbol{0.2534}$&	0.3343&	0.3616 & 0.3166 & 0.3480\\
&1.5&	0.8122&	0.5757&	$\boldsymbol{0.5228}$&	0.6764&	0.7340&	0.6802	& 0.7217\\
&2&	1.4227&0.9750&$\boldsymbol{0.8954}$&1.1569&1.3134&	1.1310 &	1.2404
\\
\hline
\multirow{6}{2em}{200} &0.3&	0.04982&	0.02886&	$\boldsymbol{0.0267}$&	0.03518&	0.0435&	0.0293 & 0.0377\\
&0.5&	0.0740&	0.0632&	0.0553&	0.0589&	0.0640&	$\boldsymbol{0.0552}$& 0.05780\\
&0.7&	0.1071&	0.1092&	$\boldsymbol{0.0857}$ & 0.0907&	0.0936&	0.08777 &	0.0867\\
&1&	0.1743&	0.2046&	$\boldsymbol{0.1407}$&	0.1526&	0.1568&	0.1528 & 0.1527\\
&1.5&	0.3395&	0.43460&	$\boldsymbol{0.2751}$&	0.3022&	0.3232&	0.3007 &	0.3165\\
&2&	0.5732&	0.7590&	$\boldsymbol{0.4676}$&	0.5068&	0.5470&	0.4991 & 0.5310\\
\hline
\multirow{6}{2em}{1000} &0.3&	0.02481&	$\boldsymbol{0.0078}$&	0.0121&	0.0127&	0.0222&	0.0101 & 0.0187\\
&0.5&	0.0293	&0.0173&	0.0197&	0.0194&	0.0251&	$\boldsymbol{0.0166}$	&	0.0225
\\
&0.7&	0.0353&	0.0287&	0.0288&	0.02748&	0.0304&	$\boldsymbol{0.0247}$&	0.0277
\\
&1&	0.0502&	0.0494&	0.0466&	0.0443&	0.0435&	$\boldsymbol{0.0420}$&	0.04245\\
&1.5&	0.0825&	0.08977&	0.0812&	0.0799&	0.0776&0.0814&	$\boldsymbol{0.0752}$
\\
&2&	0.1279&	0.1442&	0.1254&	0.1292&	$\boldsymbol{0.1243}$&	0.1299	& 0.1246
\\
\hline
\end{tabular}
\end{center}
\end{table}
\clearpage

In Table \ref{tab:table-2}, we provide the MISEs for simulated data using the exponential regression function
$r_1(x)$  
With the uniform design density and the random normal error term.  For fixed sample size, as the standard deviation increases, the tilted estimators otperform others. Although, for large sample sizes, the conventional estimators tend to perform better than tilted estimators. For smaller sample sizes and the moderate standard deviation levels, the tilted N-W estimator remains superior to the conventional estimators  at some extent.

\begin{table}[h!]
\caption{\label{tab:table-2} MISE for Infinite Order (IO) estimator with the trapezoidal kernel, Nadaraya-Watson (NW) estimator, standard local linear (LL) estimator, tilted NW estimator with 4 (NW p4) and 10 (NW p10) weighting nodes, tilted LL estimator with 4 (LL p4) and 10 (LL p10) weighting nodes, exponential regression function and uniform design density. In each row the minimum MISE is highlighted in bold.}
\begin{center}
\begin{tabular}{|c|c|c|c|c|c|c|c|c|} 
\hline
$n$ & $\sigma$ & IO & NW & LL & NW p4 &NW p10 & LL p4 & LL p10\\
\hline
\multirow{6}{2em}{60} &0.3&0.1559&	$\boldsymbol{0.0663}$&	0.0566&	0.1308&	0.1529&	0.1237 & 0.1470\\
&0.5& 0.1980& $\boldsymbol{0.1398}$&0.1362&	0.1724&	0.1953&	0.1690&	0.1901\\ 
& 0.7&	0.2515&	$\boldsymbol{0.2152}$&	0.2098&	0.2316&	0.2492&	0.2406&	0.2433\\ 
&1&	0.3588&	0.3697&	0.3599&	$\boldsymbol{0.3418}$&	0.3650&	0.3691&	0.3608
\\ 
&1.5&	0.6530&	0.6281&	0.6821&	$\boldsymbol{0.6197}$&	0.6520&	0.6676 &0.6692
\\ 
&2&	1.0524&	0.9892&	2.2171&	$\boldsymbol{0.9871}$&	1.0597&	1.0287 & 1.0287\\
\hline
\multirow{6}{2em}{100} & 0.3&	0.1195&	0.0442&	$\boldsymbol{0.0360}$&	0.1034&	0.1191&	0.0982	&0.1156\\
&0.5&	0.1432&	0.0914&	$\boldsymbol{0.0809}$&	0.1253&	0.1426&	0.1218&	0.1372
\\
&0.7&	0.1781&	0.1443&	$\boldsymbol{0.1376}$&	0.1607&	0.1766&	0.1581	&0.1695
\\
&1&	0.2490&	0.2324&	0.2438&	0.2305&	0.2469&	0.2497	& 0.2460\\
&1.5&	0.4165&	0.4366&	0.5221&	$\boldsymbol{0.4041}$&	0.4144&	0.4373&	0.4198
\\
&2&	0.6487&	0.6780&	0.8402&	$\boldsymbol{0.6107}$&	0.6371&	0.6619&	0.6401\\
\hline
\multirow{6}{2em}{200} &0.3&0.0991&	0.0232&	$\boldsymbol{0.0209}$&	0.0891&	0.0997&	0.0833	&0.0972
\\
&0.5&	0.1089&	0.0470&	$\boldsymbol{0.0441}$&	0.0993&	0.1086&	0.0944&	0.1063\\
&0.7&	0.1256&	0.0822&	$\boldsymbol{0.0757}$&	0.1172&	0.1253&	0.1107&	0.1228\\
&1&	0.1577&	$\boldsymbol{0.1299}$&	0.1351&	0.1533&	0.1589 &	0.1554	& 0.1590\\
&1.5&	0.2401&	0.2542&	0.3349&	$\boldsymbol{0.2386}$&	0.2416&	0.2587 & 0.2426\\
&2&	0.3568&	0.3878&	0.4218&	$\boldsymbol{0.3464}$&	0.3534&	0.3938 & 0.3573\\
\hline
\multirow{6}{2em}{1000} &0.3& 0.0801&	0.0058&	$\boldsymbol{0.0056}$&	0.0776&	0.0800&	0.0724	&0.0797\\
&0.5&	0.0823&	$\boldsymbol{0.0125}$&	0.01286&	0.0790&	0.0825&	0.0718	&0.0819\\
&0.7&	0.0853&	0.0207&	$\boldsymbol{0.0206}$&	0.0801&	0.0845&0.07170&	0.0843
\\
&1&	0.0922&	$\boldsymbol{0.0356}$&	0.0359&	0.0830&	0.0917&	0.0728	& 0.0895
\\
&1.5&	0.1080&	0.0716&	$\boldsymbol{0.0670}$&	0.0972&	0.1074&	0.0911&	0.1041
\\
&2&	0.1294&	0.1286&	$\boldsymbol{0.1056}$&	0.1209&	0.1308&	0.1235&0.1271\\
\hline
\end{tabular}
\end{center}
\end{table}
\clearpage
Table \ref{tab:table-3} presents the MISEs for the simulated data using sin function with uniform design density and normal random errors. For  fixed sample size and moderate standard deviations 0.5 and 0.7, the tilted estimators perform better than conventional estimators. For sample size n=1000 with and increasing standard deviation, the tilted estimators demonstrate better performance over others.
\begin{table}[h!]
\caption{\label{tab:table-3} MISE for Infinite Order (IO) estimator with the trapezoidal kernel, Nadaraya-Watson (NW) estimator, standard local linear (LL) estimator, tilted NW estimator with 4 (NW p4) and 10 (NW p10) weighting nodes, tilted LL estimator with 4 (LL p4) and 10 (LL p10) weighting nodes, sin regression function, uniform design density, edges included. In each row the minimum MISE is highlighted in bold.}
\begin{center}
\begin{tabular}{|c|c|c|c|c|c|c|c|c|} 
\hline
$n$ & $\sigma$ & IO & NW & LL & NW p4 &NW p10 & LL p4 & LL p10\\
\hline
\multirow{6}{2em}{60} &0.3&0.04034&	0.0286&	0.0234&	0.0315&	0.0331&	$\boldsymbol{0.0228}$ &	0.0264
\\ 
&0.5&0.0663&	0.0638&	0.0570&	0.0585&	0.0597&	$\boldsymbol{0.0507}$&	0.0534\\ 
& 0.7&0.1085&$\boldsymbol{0.0840}$&	0.1329&	0.0971&	0.0969&	0.0897 & 0.090\\ 
&1&	0.1958&	0.1703&	$\boldsymbol{0.1424}$&	0.1731&	0.1749&	0.1714 & 0.1692
\\ 
&1.5&0.4036&$\boldsymbol{0.2410}$&	0.2652&	0.3616&	0.3718&	0.3600&	0.3604\\ 
&2&0.7003&	$\boldsymbol{0.3739}$&	0.3958&	0.6198&	0.6463&	0.6203 &0.6307\\
\hline
\multirow{6}{2em}{100} & 0.3&0.0222&	0.0166&	$\boldsymbol{0.0129}$&	0.0193&	0.0197&	0.0130	&0.0152
\\
&0.5&0.0371&	0.0326&	0.0286&	0.0348&	0.0349&	$\boldsymbol{0.0282}$ & 0.0308
\\
&0.7&0.0595&0.0533&	$\boldsymbol{0.0498}$&	0.0560&	0.0554&	0.0506 &	0.0517\\
&1&0.1054&	0.0936&	$\boldsymbol{0.0868}$&	0.1008&	0.1004&	0.0989&	0.09720
\\
&1.5&0.2183&$\boldsymbol{0.1541}$&	0.1728&	0.2078&	0.2072&	0.2067 & 0.2032\\
&2&	0.3784&	$\boldsymbol{0.2869}$&	0.2985&	0.3520&	0.3580&	0.3573 &	0.3549\\
\hline
\multirow{6}{2em}{200} &0.3&0.0123&	0.0093&	0.0070&	0.0112&	0.0114&	$\boldsymbol{0.0067}$&	0.0080\\
&0.5&0.0191&0.0190&	0.0161&	0.0195&	0.0194&	$\boldsymbol{0.0143}$ &	0.0150\\
&0.7&0.0299&0.0306&	0.0273&	0.0308&	0.0296&	$\boldsymbol{0.0251}$&	0.0260\\
&1&	0.0522&	0.0492&	0.0494&	0.0533&	0.0517&	0.0484 &	$\boldsymbol{0.0471}$\\
&1.5&0.1073&$\boldsymbol{0.0885}$&	0.1043&	0.1046&	0.1048&	0.1044 &	0.1011\\
&2&	0.1830&	$\boldsymbol{0.1334}$&	0.1432&	0.1770&	0.1789&	0.1812	&0.1752\\
\hline
\multirow{6}{2em}{1000} &0.3&0.0056&0.0023&	0.0019&	0.0046&	0.0051&	$\boldsymbol{0.0018}$&	0.0027
\\
&0.5&0.0070&0.0049&	0.0042&	0.0065&	0.0066&	$\boldsymbol{0.0034}$&	0.0043
\\
&0.7&0.0089&0.0082&	0.0071&0.0091&	0.0091&	$\boldsymbol{0.0056}$&	0.0064
\\
&1&0.0132&0.0145&0.0127&0.01402&0.01346&$\boldsymbol{0.0101}$&	0.0108\\
&1.5&0.0234&0.0254&	0.0241&	0.0250&	0.0242&$\boldsymbol{0.0214}$& 0.0212\\
&2&	0.0376&0.0424&0.0386&0.0404&0.0388&	0.0369	& $\boldsymbol{0.0358}$\\
\hline
\end{tabular}
\end{center}
\end{table}
\clearpage
For studying boundary effect the results provided in Table \ref{tab:table-3} and \ref{tab4}  were evaluated under the identical experimental specifications except the MISEs in Table \ref{tab4} were evaluated over [0.15,0.85]. According to the results when the sample size and standard deviation increased, the tilted estimators demonstrated improved performance. 
\begin{table}[h!]
\caption{\label{tab4} MISE for Infinite Order (IO) estimator with the trapezoidal kernel, Nadaraya-Watson (NW) estimator, standard local linear (LL) estimator, tilted NW estimator with 4 (NW p4) and 10 (NW p10) weighting nodes, tilted LL estimator with 4 (LL p4) and 10 (LL p10) weighting nodes, sin regression function, uniform design density, edges excluded. In each row the minimum MISE is highlighted in bold.}
\begin{center}
\begin{tabular}{|c|c|c|c|c|c|c|c|c|} 
\hline
$n$ & $\sigma$ & IO & NW & LL & NW p4 &NW p10 & LL p4 & LL p10\\
\hline
\multirow{6}{2em}{60} &0.3&0.0261&	0.0182&	0.0143&	0.0201&	0.0213&	$\boldsymbol{0.0142}$ &	0.0184\\ 
&0.5&0.0461&0.0421&	0.0327&	0.03747&0.0394&	$\boldsymbol{0.0319}$&	0.0365\\ 
&0.7&0.0740&$\boldsymbol{0.0498}$&	0.0791&	0.0628&	0.0638&	0.05783	&0.0618\\ 
&1&0.1315&0.1130&$\boldsymbol{0.0796}$&0.1189&	0.1209&	0.1095&	0.1184
\\
&1.5&0.2736&$\boldsymbol{0.1461}$&	0.1468&	0.2519&	0.2610&	0.2420&	0.2513
\\ 
&2&	0.4747&	0.2646&	$\boldsymbol{0.2208}$&	0.4271&	0.4508&	0.4215	&0.4406
\\
\hline
\multirow{6}{2em}{100} & 0.3&0.0132&0.0101&	$\boldsymbol{0.0078}$&	0.0108&	0.0109&	0.0082	&0.0096
\\
&0.5&0.0233&0.0194&	$\boldsymbol{0.0168}$&	0.0201&	0.0207&	0.0173	&0.0197
\\
&0.7&0.0380&0.0307&	$\boldsymbol{0.0285}$&	0.0340&	0.0346&	0.0308 &	0.0331
\\
&1&0.0688&	0.0562&	$\boldsymbol{0.0493}$&	0.0626&	0.0641&	0.0587 &	0.0626
\\
&1.5&0.1459&$\boldsymbol{0.0912}$&	0.0943&	0.1319&	0.1365&	0.1270&	0.1349\\
&2&0.2536&0.1750&$\boldsymbol{0.1584}$&0.2304&	0.2391&	0.2228	&0.2362
\\
\hline
\multirow{6}{2em}{200} &0.3&0.0063&	0.0054&	0.0044&	0.0054&	0.0054&	$\boldsymbol{0.0040}$&	0.0049
\\
&0.5&0.0112&0.0121&	0.0095&	0.0100&	0.0102&	$\boldsymbol{0.0084}$&	0.0097\\
&0.7&0.0183&0.01733&0.0164&	0.0162&	0.0166&	$\boldsymbol{0.0149}$&	0.0164
\\
&1&0.0330&	0.0296&	0.0300&	0.0298&	0.0307&	$\boldsymbol{0.0288}$&	0.0302
\\
&1.5&0.0679&$\boldsymbol{0.0531}$&	0.0627&	0.0636&	0.0653&	0.0610&	0.0645\\
&2&	0.1160&	$\boldsymbol{0.0816}$&	0.0834&	0.1116&	0.1125&	0.1094&0.1132
\\
\hline
\multirow{6}{2em}{1000} &0.3&0.0013&0.0014&	0.0012&	0.0011&	0.0011&	$\boldsymbol{0.0009}$&	0.0010
\\
&0.5&0.0022&0.0028&	0.0026&	0.0019&	0.0018&	$\boldsymbol{0.0017}$&	0.0018
\\
&0.7&0.0035&0.0045&	0.0044&	0.0032&	0.0033&	$\boldsymbol{0.0029}$&	0.0031
\\
&1&	0.0064&	0.0089&	0.0077&	0.0060&	0.0062&	$\boldsymbol{0.0057}$&	0.0060
\\
&1.5&0.0136&0.0152&	0.0144&	0.0126&	0.0131&	$\boldsymbol{0.0123}$&	0.0130\\
&2&0.0236&	0.0247&	0.0229&	0.0219&	0.0229&	$\boldsymbol{0.0218}$&	0.0225
\\
\hline
\end{tabular}
\end{center}
\end{table}
\clearpage
It is a known fact that the performance of  Nadaraya-Watson estimator deteriorates near edges,   \cite{martinez2007computational}. This effect is often referred to as a boundary problem. 
The results presented in Table 3 and  4, illustrate that the for scenarios $(n=60,\; \sigma=0.5),$  $(n=200,\; \sigma=0.7)$ and $(n=1000,\; \sigma=\{1, 1.5, 2\})$ the tilted Nadaraya-Watson estimator outperformed its classical counterpart.
From the boxplots in Figure \ref{boxblot} it is evident that the tilted estimators have smaller median ISEs. The extreme values of the ISEs for the tilted estimators are smaller than these of the conventional estimators. Similarity between the ISE distributions and their spreads of the IO  and tilted estimators can also be seen in Figure \ref{boxblot}.  
\begin{figure}[htp]
    \centering
    \includegraphics[width=12cm]{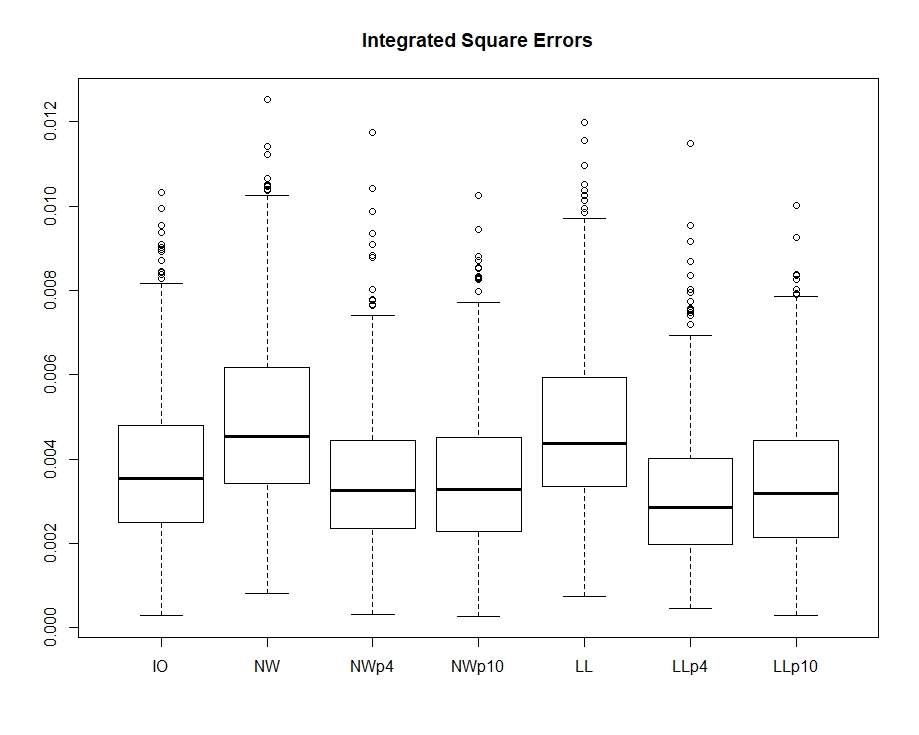}
    \caption{Boxplots of Integrated Square Errors (ISE) for Infinite Order (IO) estimator with the trapezoidal kernel, Nadaraya-Watson (NW) estimator, standard local linear (LL) estimator, tilted NW estimator with 4 (NW p4) and 10 (NW p10) weighting nodes, tilted LL estimator with 4 (LL p4) and 10 (LL p10) weighting nodes, sin regression function, edges excluded, $n=1000$ and $\sigma=0.7$ }
    \label{boxblot}
\end{figure}
\clearpage
 In this simulation study for carrying out the MISE analysis,  we had 500 replications done using Monte Carlo method. For MISE evaluation at each combination of an estimator, a function, a standard deviation  and for a fixed sample size we had to solve the optimization problem. For the numerical implementation, we used the parallel computing technique in R  facilitated through 'snow', 'doparallel', and 'foreach' packages.

\section{Real data}
In this section, we study the performance of tilted estimators  in the real data environment. 
\subsection{ COVID-19 Data }

The tilted N-W estimator along with two other kernel-based estimators are being used for a curve fitting to the COVID-19 data. We shall apply the tilted N-W estimator approach to daily confirmed new cases and number of daily death for 12 countries including Iran, Australia, Italy, Belgium, Germany, Spain, Brazil, United Kingdom, Canada, Chile, South Africa and United States of America, from 23 February 2020 to 28 October 2020, downloaded from
 \href{https://www.ecdc.europa.eu/en/publications-data/download-todays-data-geographic-distribution-covid-19-cases-worldwide}{https://www.ecdc.europa.eu}. The logarithmic  transformation has been applied, and when the number of deaths or new confirmed cases were zero we altered these observations by a positive value eliminating the associated singularity issue. The optimal bandwidth for each Nadaraya-Watson estimator was found through minimization of relevant cross-validation function, \cite{wasserman2006all}, at the same time we kept the bandwidth fixed for an infinite order flat-top kernel (IO) estimator which was found using Mcmurry and Politis’ rule of thumb, \cite{mcmurry2004nonparametric}. 
Along with the tilted N-W estimator, we applied the NW, and IO estimators. The tilted NW estimator performed the best in terms of the Mean Square Errors (MSE).
Table \ref{msecase} and \ref{msedeath} provide the MSE for each estimator for the confirmed new cases and number of deaths. In terms of minimising the MSE, the tilted NW estimator ranked first, followed by IO and N-W estimators. Slightly, improved performance of the tilted NW estimator is attributed to the lower MSE components at the edges versus other kernel-based regression function estimators which are generally known for so-called "edge effect", \cite{hall1991geometrical}.
Figures \ref{case} and \ref{death} shows curve fit to the daily confirmed cases and daily deaths respectively. 

\begin{figure}[htp]
    \centering
    \includegraphics[width=22cm,
  height=20cm,
  keepaspectratio,
 angle=90]{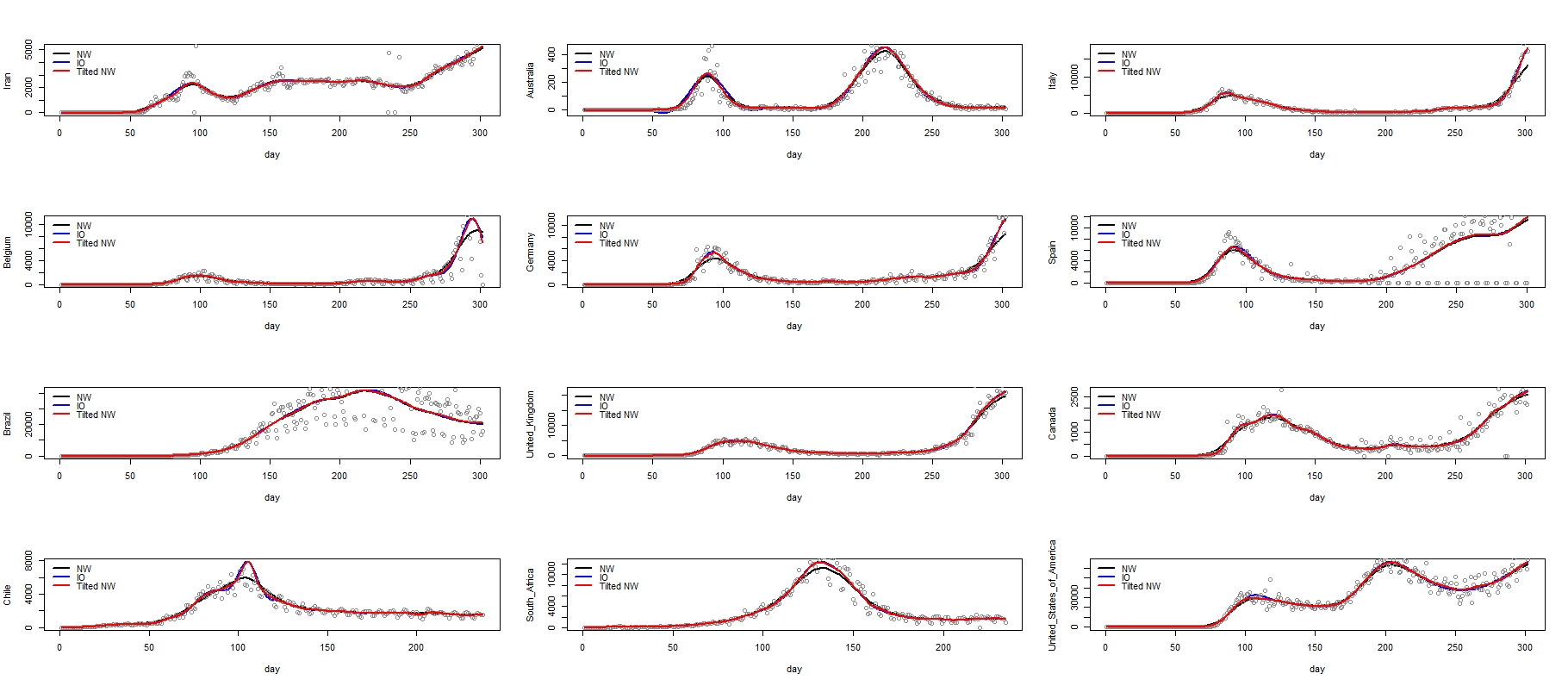}
    \caption{Daily confirmed cases for 12 countries including Iran, Australia, Italy, Belgium, Germany, Spain, Brazil, United Kingdom, Canada, Chile, South Africa and United States of America }
    \label{case}
\end{figure}
\clearpage

\begin{table}[h!]
\caption{\label{msecase} COVID-19 Daily Confirmed Cases: MSE for Nadaraya-Watson (NW), Infinite Order (IO), and tilted (NW p4) estimators. In each row the minimum MSE is highlighted in bold.}
\begin{center}
     \setlength{\tabcolsep}{2pt}
     \small
\begin{tabular}{|c|c|c|c|} 
\hline
Country & IO & NW & NW p4 \\
\hline
 Iran &324
  & 	326	 & $\boldsymbol{305}$ \\
\hline
Australia & 5.48	
 & 5.41	 & $\boldsymbol{5.01}$\\
\hline
Italy & 445	
 &1590 &$\boldsymbol{372}$\\
\hline
Belgium &  2357	
 &  3071	& $\boldsymbol{2232}$\\
\hline
Germany & 641	
  & 1029	 & $\boldsymbol{594}$\\
\hline
Spain & 41621	
 &  41754	 & $\boldsymbol{41241}$\\
\hline
Brazil &108590	
  & 108246	 & $\boldsymbol{107781}$\\
\hline
United Kingdom & 1789	
 & 2083	 & $\boldsymbol{1737}$\\
\hline
Canada & 175
 & 	183	 & $\boldsymbol{173}$\\
\hline
Chile & 6040	
 & 6685 & $\boldsymbol{	5991}$\\
\hline
South Africa &1158
 & 	1403	 &$\boldsymbol{1133}$\\
\hline
 United States of America & 43910
 &46512 & $\boldsymbol{	41694}$\\
 \hline
\end{tabular}
\end{center}
\end{table}

\begin{table}[h!]
\caption{\label{msedeath} COVID-19 Death: MSE for Nadaraya-Watson (NW), Infinite Order (IO), and tilted (NW p4) estimators. In each row the minimum MSE is highlighted in bold.}
\begin{center}
     \setlength{\tabcolsep}{2pt}
     \small
\begin{tabular}{|c|c|c|c|} 
\hline
Country & IO & NW & NW p4 \\
\hline
 Iran &1.36
  & 	1.39	 & $\boldsymbol{1.32}$ \\
\hline
Australia &	0.0225	

 & 	0.0223	 & $\boldsymbol{0.0222}$\\
\hline
Italy & 1.97	

 &3.66 &$\boldsymbol{	1.89}$\\
\hline
Belgium &  0.0699

 &  	0.3077		& $\boldsymbol{0.0690}$\\
\hline
Germany & 0.71
	
  & 	0.82 & $\boldsymbol{	0.68	}$\\
\hline
Spain & 37.74	
	
 &  37.92	 & $\boldsymbol{36.77}$\\
\hline
Brazil & 63.99
	
  & 	64.21		 & $\boldsymbol{63.86}$\\
\hline
United Kingdom & 13.57
	
 & 		14.50	 & $\boldsymbol{13.01}$\\
\hline
Canada & 0.40

 & 		0.47		 & $\boldsymbol{0.39}$\\
\hline
Chile & 9.10
	
 & 	9.73	 & $\boldsymbol{8.95	}$\\
\hline
South Africa & 3.32
 & 	3.50 &$\boldsymbol{3.22	}$\\
\hline
 United States of America & 201.39	
 
 & 205.98& $\boldsymbol{	197.76	}$\\
 \hline
\end{tabular}
\end{center}
\end{table}
\begin{figure}[htp]
    \centering
    \includegraphics[width=22cm,
  height=20cm,
  keepaspectratio,
 angle=90]{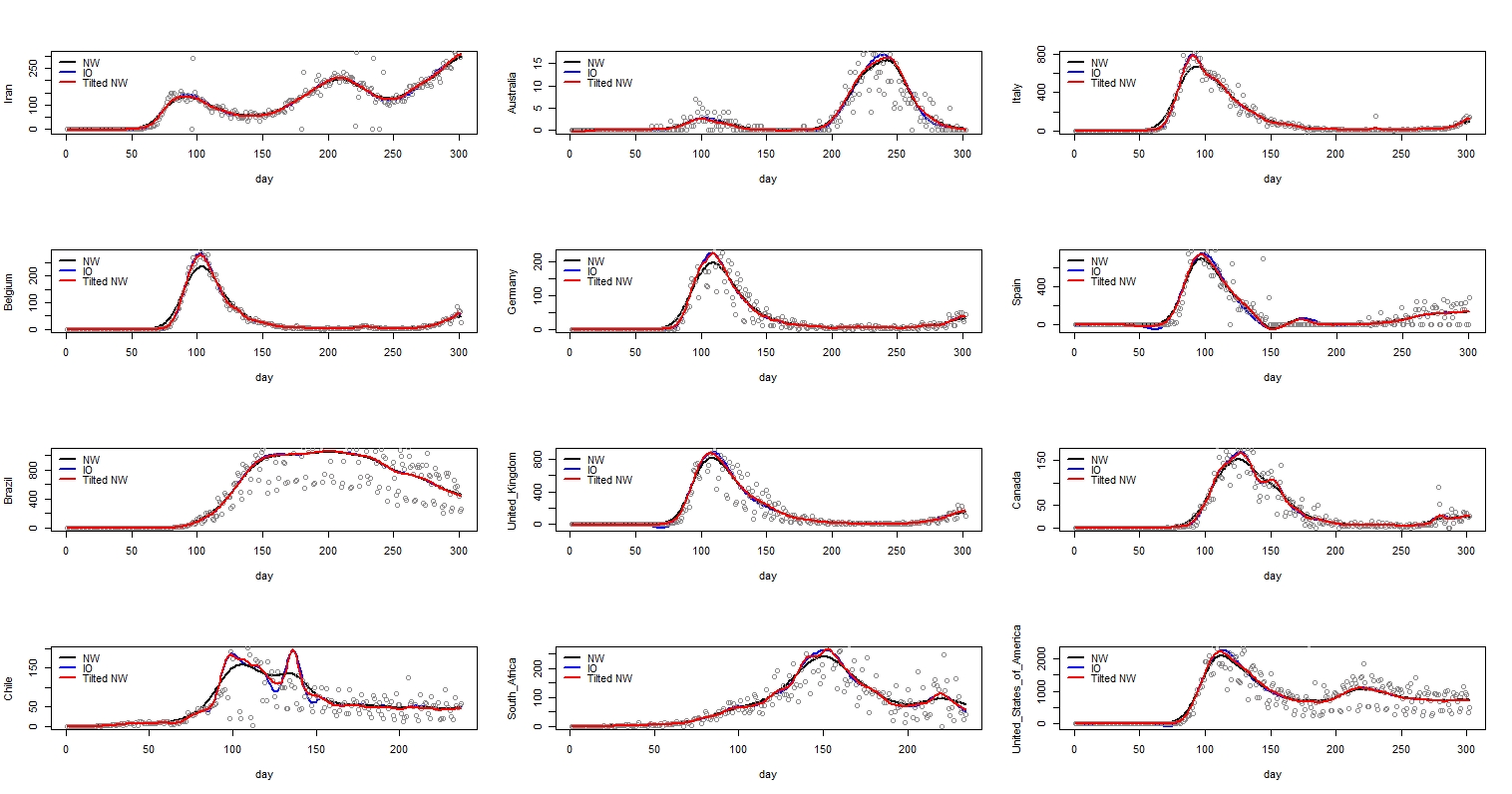}
    \caption{Daily deaths for 12 countries including Iran, Australia, Italy, Belgium, Germany, Spain, Brazil, United Kingdom, Canada, Chile, South Africa and United States of America }
    \label{death}
\end{figure}
\clearpage

\subsection{Dose-Response data}
The dose-response data refers to a study of phenylephrine effects on rat corpus cavernosum strips. This data first appeared in  Boroumand et al. \cite{boroumand2016application} where the dose-response curves to phenylephrine (0.1 $\mu M$ to 300 $\mu M$) were obtained by applying the robust four-parameter logistic (4PL) regression.
Here we have used a tilted smoother approach to  dose-response curve fitting. In terms of Mean Square Errors (MSEs) the tilted local linear estimators performed better than local linear, infinite order flat-top kernel estimators including the robust 4PL  model.
The fitted dose-response curves using the tilted local linear, local linear, infinite order flat-top kernel estimator, and 4PL model are plotted in Figure \ref{real data3}. The corresponding MSEs are listed in the caption of Figure \ref{real data3}.
\begin{figure}[htp]
    \centering
    \includegraphics[width=12cm]{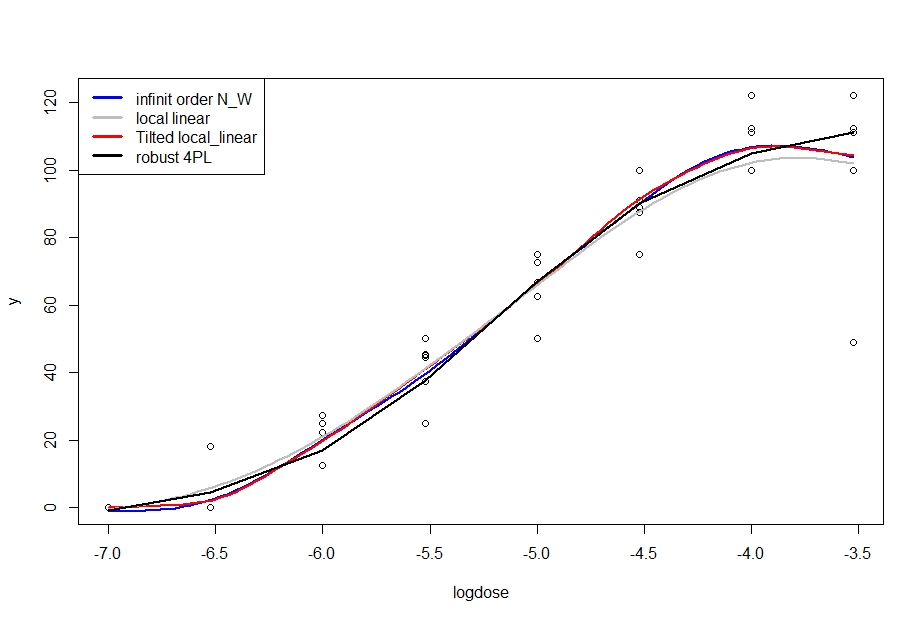}
    \caption{Dose-response curves: MSE for tilted local linear, local linear, infinite order flat-top kernel estimator and 4PL model are 95.1023, 95.3267, 95.53077 and 110.7539, respectively.}
    \label{real data3}
\end{figure}

The original  dose-response data contained the outliers and the standard 4PL model had a poor fit. Due to this we compared the performance of the tilted estimator with the robust 4PL model. The tilted local linear estimator outperformed the robust 4PL model in terms of MSE.

\section*{Acknowledgement}
This research was undertaken with the assistance of resources and services from the National Computational Infrastructure (NCI), which is supported by the Australian Government.
This research forms part of the first author’s PhD thesis approved by the ethics committee of Mashhad University of Medical Sciences with the project code 971017.
\medskip
\printbibliography
\clearpage

\begin{appendices}
\section{Proof of Theorem 3}
In this section we provide proof of Theorem 3 for tilted Nadaraya-Watson estimator as a form of tilted linear smoother from (2.3). The result for tilted local linear smoother can be proved analogously.

\begin{proof}
Let
$p_i=\frac{1}{n}\pi(X_i)$ where $\pi \geq0$ and is a smooth function, $\sum_{i=1}^{n}p_i=1$ which is equivalent  $\int\pi(X)f_X(x)dx=1$ for continuous $X$. For simplicity, we replace $f_X(x)$ by $f$ then

\begin{align}
    \sum_{i=1}^{n}p_i & =\frac{1}{n}\sum_{i=1}^{n}\pi(X_i) \nonumber\\
                      & =E\pi(X) + O_p(n^{-1/2}) \nonumber\\
                      & = \int\pi(X) fdx + O_p(n^{-1/2})\nonumber\\
                      & = 1+O_p(n^{-1/2}),\nonumber
\end{align}
thus for normalising  $p_is$ so that $\sum_{i}p_i=1$ we need to multiply the estimator (2.3) by $1+O_p(n^{-1/2})$. The factor $O_p(n^{-1/2})$ is negligibly small. 
We choose $\pi$ such that $\hat{r}_n(x|h,p)$ in (2.3)
is unbiased estimator for $r$, i.e
\begin{align}
E\hat{r}(x|h,p)=r.    \nonumber \\ 
\end{align}
From (A.1) we have
\begin{align}
    E \hat{r}(x)\hat{g}(x) & = \frac{1}{h} \sum_{i=1}^{n} E{p_i Y_i 
                            K(\frac{x-X_i}{h})} \nonumber\\
                           & = \frac{1}{nh} \sum_{i=1}^{n} EE\{Y_i \pi(X_i)K(\frac{x-X_i}{h})|X_i\} \nonumber\\
                           & = \frac{1}{h} \int_{-\infty}^{\infty} r(t) \pi(t)K(\frac{x-t}{h})f_X(t)dt,
\end{align}
where
\begin{equation}
    \hat{g}(x)=\frac{1}{nh}\sum_{i=1}^{n}K(\frac{x-X_i}{h})\nonumber
\end{equation}
and $g(x)= E\hat{g}(x)$.
It can be shown that the left-hand side of (A.2) is converging to $r(x)g(x).$ 

We have 
$$r(x)g(x)=\frac{1}{h} \int_{-\infty}^{\infty} r(t') \pi(t')K(\frac{x-t'}{h})f_X(t')dt',$$
multiplying both sides by $e^{-itx}$  and integrating over $x$, we deduce
$$\Phi_{rg}(t)=\frac{1}{h}\int_{-\infty}^{\infty}\int e^{-itx} r\pi f(t')K(\frac{x-t'}{h}) dx' dt',$$
by changing variable $\frac{x-t'}{h}=u$, we have

\begin{align}
   \Phi_{rg}(t) &= \int_{-\infty}^{\infty} e^{-itx} r\pi f(t')\Phi_k (t) dt' \nonumber\\
                &= \Phi_k(t) \Phi_{r \pi f}(t). \nonumber
\end{align}
\begin{align}
    \Phi_{r \pi f} &= \frac{\Phi_{rg}(t)}{\Phi_k(t)}, \nonumber\\
    \pi r f &= \frac{1}{2\pi}\int_{-\infty}^{\infty} e^{-itx}  \frac{\Phi_{rg}(t)}{\Phi_k(t)},\nonumber\\
    \pi(X) &= \frac{1}{2 \pi r f}\int_{-\infty}^{\infty} e^{-itx}  \frac{\Phi_{rg}(t)}{\Phi_k(t)} dt,
\end{align}

 if kernel $K$ holds the assumption (c) and $q=r\cdot g$ meet the assumption (d),
then
\begin{align}
    \pi &= 1+ \sum_{j=1}^{n} C_j (-h^2)^j \frac{rg^{(2j)}}{rf},
\end{align}
with $\pi$ from (A.3), then $\hat{r}_n$ in unbiased. Next we show that $\pi$ satisfies $0< \pi(X)<1$.

If the assumption (e) relaxed then there exist $C_6$ and $h_0 \geq 0$, for all $h$, $0\leq h\leq h_0$,  $\pi>0$ and $\sup \pi \leq C_6 < \infty$
then for unbiased $\hat{r}_n$

\begin{align}
    \int var \{\hat{r}_n(x|h,p)\}dx &\leq \frac{1}{nh^2} \int E\{\pi^2(X)K^2( \frac{x-X}{h})\}dx, \nonumber\\
                                    &\leq \frac{1}{nh} (\sup \pi)^2 \int K^2 dx, \nonumber\\
                     & = O\{nh^{-1}\}.
\end{align}
 So $MSE$ can be written as 
\begin{align}
    MSE\{\hat{r}_n(x|h,p)\} &= \int E \{\hat{r}_n(x|h,p) - r\}^2 dx,\nonumber \\
    &= O\{nh^{-1}\}.\nonumber
\end{align}
We recall that
\begin{enumerate}
\item [(f)] Under the assumption (e)-(1), $\delta_n \rightarrow 0$ as $n \rightarrow \infty$, so that $n^{1/2}\delta_n \rightarrow \infty$ , and  under assumption (e)-(2), $n^{1/2}log(n)^{-C_2 /4C_5}\delta_n \rightarrow \infty$, where $C_2$ and $C_5$ are defined in (e)-(2).
\end{enumerate}
Then under the assumption (e)-(1) and (f), we have that $n^{1/2} \delta_n \rightarrow \infty$ thus $n^{-1}=o(\delta^2_n)$. Consequently, there exists $h(n)\downarrow 0$ as $n \rightarrow \infty$ such that $(nh)^{-1}=O(\delta^2_n),$
 for some large $n,$ $h<h_0$   since $0\leq h\leq h_0$. Next, by replacing $O(\delta^2_n)$ in the right-hand side of (A.5), we have a new form of (A.5) which is true for specific choice of $\pi$ defined at (A.3), and considering $\tilde{\theta}=(h,p)$ in the case of (2.3): 
\begin{align}
    \lim_{C\to\infty} \limsup_{n\to\infty} P\{\Vert\hat{r}_n(.|\tilde{\theta}) - r \Vert \geq C\delta_n\}=0.
\end{align}
For this version of $p_i=n^{-1}\pi(X_i)$,  $\sum_{i}p_i=1$ does not satisfy. However, this issue can be fixed by normalisation similar to that done in the first paragraph of the proof.

Property (A.6) implies  part $\textbf{I}$ of Theorem 3 and part $\textbf{II}$ can be concluded under uniformity of (A.6) over $\mathcal{C}  $.

Under assumption (e)-(2) and  (A.4), $|rg^{(2j)}(x)/rf|\leq C_1(1+|x|)^{C_2}$ for $1\leq j\leq k$, defining $C_6= \max (|C_1|,...,|C_k|)$, we have for $0 \leq h \leq 1$
\begin{align}
    |\pi(X)-1|&= |\sum_{j=1}^{n} C_j (-h^2)^j \frac{rg^{(2j)}}{rf}|, \nonumber\\
              &\leq C_1C_6 K(1+|x|)^{C_2}h_2, \nonumber
\end{align}
so, if $\lambda_{1n}\rightarrow \infty$ and $\lambda_{1n}^{C_2}h^2\rightarrow 0$, then
\begin{align}
    \sup|\pi(X)-1|\rightarrow 0
\end{align}
it means that whenever $|X|\leq \lambda_{1n} $, $0< \pi(X)<1$.

In the first paragraph of the proof, we showed that $\pi(X)\geq 0$. Then we found an upper bound for $\pi(X)$ in (A.7) when $X \in [-\lambda_{1n}, \lambda_{1n}]$. Now, we want to show that the probability of $X$ being out of this interval is almost zero which means for all $X$, $0< \pi(X)<1$.

Assumption (e)-(2) implies that 
\begin{align}
    P(|X|\geq \lambda_{1n}) &\leq C_3 exp (-C_4 \lambda_{1n}^{C_5}).
\end{align}
Using (f), $n^{-1/2}(\log n)^{C_2/2C_5} \delta_n \rightarrow \infty$, or equivalently, $\delta_n^2= \lambda_{2n} n^{-1} (\log n)^{C_2/2C_5} \rightarrow \infty$, where $\lambda_{2n}$ exists. We choose $h$ so that $(nh)^{-1}=O(\delta_n^2)$; or for simplicity, $(nh^{-1})=\delta_n^2$, then $h=\lambda_{2n}^{-1}(\log n)^{-C_2/2C_5}$; let $\lambda_{1n}=\{ \lambda_{2n}^{2-\eta} (\log n)^{ C_5/C_2}\}^{1/C_2}$, where $\eta \in (0,2)$, so
\begin{align}
    \exp (-C_4 \lambda_{1n}^{C_5}) &= \exp (-C_4\lambda_{2n}^{(2-\eta) C_5/C_2 } \log n), \nonumber\\
                                 &= O(n^{-C}), \nonumber
\end{align}
for all $C>0$. Therefore by (A.8),
\begin{align}
    P(|X|\geq \lambda_{1n}) &= O(n^{-C}),\nonumber
\end{align}
for all $C>0$.
\end{proof}

\end{appendices}
\end{document}